\newtheorem{theorem}{Theorem}
\theoremstyle{definition}
\newtheorem{defn}{Definition}[]
\newcommand{\E}{\mathbb{E}}
\newcommand{\R}{\mathbb{R}}
\newcommand{\Expect}{{\rm I\kern-.9em E}}
\newcounter{inlineequation}
\DeclareMathAlphabet{\mathpzc}{OT1}{pzc}{m}{it}
\definecolor{Gray}{gray}{0.9}
\definecolor{LightCyan}{rgb}{0.88,1,1}
\NewDocumentCommand{\ceil}{s O{} m}{%
  \IfBooleanTF{#1} 
    {\left\lceil#3\right\rceil} 
    {#2\lceil#3#2\rceil} 
}
\NewDocumentCommand{\floor}{s O{} m}{%
  \IfBooleanTF{#1} 
    {\left\lfloor#3\right\rfloor} 
    {#2\lfloor#3#2\rfloor} 
}
\begin{document}
\title{Intracell Interference Characterization and \\Cluster Inference for D2D Communication}

\author{\IEEEauthorblockN{Hafiz Attaul Mustafa\IEEEauthorblockA{$^1$}, Muhammad Zeeshan Shakir\IEEEauthorblockA{$^2$}, Ali Riza Ekti\IEEEauthorblockA{$^3}$, \\Muhammad Ali Imran\IEEEauthorblockA{$^1$}, and Rahim Tafazolli\IEEEauthorblockA{$^1$}}\\
\vspace{0.1in}
\fontsize{11}{11}\selectfont
\IEEEauthorblockA{$^1$Institute for Communication Systems, University of Surrey, Guildford, UK.}\\
\vspace{0.05in}
\fontsize{11}{11}\selectfont\ttfamily\upshape
\{h.mustafa, m.imran, r.tafazolli\}@surrey.ac.uk \\
\vspace{0.1in}
\fontsize{11}{11}\selectfont\rmfamily\
\IEEEauthorblockA{$^2$Dept. of Systems and Computer Engineering, Carleton University, Ottawa, Canada.}\\
\vspace{0.05in}
\fontsize{11}{11}\selectfont\ttfamily\upshape
muhammad.shakir@sce.carleton.ca \\
\vspace{0.1in}
\fontsize{11}{11}\selectfont\rmfamily\
\IEEEauthorblockA{$^3$Dept. of Electrical and Computer Engineering, Gannon University, Erie, PA, U.S.A.\\
\vspace{0.05in}
\fontsize{11}{11}\selectfont\ttfamily\upshape
alirizaekti@gmail.com}\\
}
\maketitle
\begin{abstract}
The homogeneous poisson point process (PPP) is widely used to model temporal, spatial or both topologies of base stations (BSs) and mobile terminals (MTs). However, negative spatial correlation in BSs, due to strategical deployments, and positive spatial correlations in MTs, due to homophilic relations, cannot be captured by homogeneous spatial PPP (SPPP). In this paper, we assume doubly stochastic poisson process, a generalization of homogeneous PPP, with intensity measure as another stochastic process. To this end, we assume Permanental Cox Process (PCP) to capture positive spatial correlation in MTs. We consider product density to derive closed-form approximation (CFA) of spatial summary statistics. We propose Euler Characteristic (EC) based novel approach to approximate intractable random intensity measure and subsequently derive nearest neighbor distribution function. We further propose the threshold and spatial extent of excursion set of chi-square random field as interference control parameters to select different cluster sizes for device-to-device (D2D) communication. The spatial extent of clusters is controlled by nearest neighbor distribution function which is incorporated into Laplace functional of SPPP to analyze the effect of D2D interfering clusters on average coverage probability of cellular user. The CFA and empirical results are in good agreement and its comparison with SPPP clearly shows spatial correlation between D2D nodes.
\end{abstract}

\begin{IEEEkeywords}
Intracell interference, D2D communication, Spatial correlation, Permanental Cox process, Random field, Euler Characteristic, Nearest neighbor distribution function.
\end{IEEEkeywords}

\section{Introduction}
\IEEEPARstart{T}{he} homogeneous poisson point process (PPP) is characterized with remarkable property of complete spatial randomness. This property is useful when underlying points are completely uncorrelated with each other and, subsequently, distributed homogeneously. For example, call arrival in cellular networks can precisely be modeled by temporal PPP if we ignore traffic inhomogeneity during day and night times. The spatial version of PPP (SPPP) is extensively used to model position of base stations (BSs) and mobile terminals (MTs) \cite{6515339, 6171996, Lee}. However, neither BSs/MTs are uncorrelated nor distributed homogeneously. Moreover, due to spatial variations in traffic, the intensity measure of the point process cannot be considered constant. The inhomogeneity and spatial correlation is usually governed by several dominant factors such as strategical deployments of BSs, homophilic relations between MTs, emergence of mobile social networks, and existence of hot-spots. As a result, homogeneous PPP is too conservative to model temporal/spatial topologies of network entities. Such point process cannot precisely model cellular networks since it cannot capture negative correlation, in case of BSs, and positive correlation, in case of MTs. The relevant processes that capture negative and positive spatial correlations are fermion and boson \cite{Shirai2003414}. These processes can, respectively, be modeled by Determinantal Point Process (DPP) and Permanental Cox Process (PCP) \cite{j_hough_determinantal_2006}. The PCP is a doubly stochastic Poisson process with intensity measure governed by chi-square random field ($\chi^2_k$-RF) with $k$ degrees of freedom (df). 

\subsection{Related Work}
The negative correlation between BSs are modeled using DPP and Ginibre point process \cite{7155510,6841639,7037373}. However, to the best knowledge of authors, the spatial modeling of MTs is restricted to homogeneous SPPP in the literature. This paper is the first attempt to model inhomogeneous  distribution of MTs with spatial correlation that exists due to any homophilic relation. In this paper, we extend our SPPP approach \cite{7164238,7145780} to PCP model with random intensity measure and inhomogeneous  distribution to characterize interference in underlay D2D network. To validate simulated realizations, we used $n$th-order product density of PCP to derive Ripley's $K$ and variance stabilized $L$ functions. These functions are compared with benchmark SPPP process to see the deviations. The more upper deviations mean high positive correlation between points of the process. The $K$ and $L$ functions of various point processes are available \cite{lavancier2015determinantal,6841639}, however no analytic expressions for PCP exist in the literature. The random intensity measure of PCP is approximated by topological inference based on expected Euler Characteristic (EC) \cite{random_2007}. This approximation is used to derive nearest neighbor distribution function which is introduced into Laplace functional of SPPP to capture interference due to D2D pairs. We propose $u$ and $r$ as interference control parameters to analyze and ensure coverage probability of cellular user.
\subsection{Contributions}
\begin{enumerate}
\item Using $n$th-order product density of PCP, we derive $K$ and $L$ functions for exponential covariance function.
\item We propose expected EC based novel approach to approximate random intensity measure of PCP which is governed by $\chi^2_k$-RF with $k$ df.
\item Inspired by statistical parameter mapping (SPM) and random field theory (RFT) approaches towards functional analysis of brain imaging \cite{friston_statistical_2007}\footnote{In standard functional analysis of brain and neuroimaging, two approaches are followed to identify activation regions against the null hypothesis (e.g., z-test, $\chi^2$-test, t-test, F-test), (i) Bonferroni correction, (ii) Random Field Theory. The functional analysis of brain comprises large number of voxels i.e., large number of statistic values. In case, the statistic values are completely independent, the former approach is best to identify activation regions. However, in multiple comparison problem, spatial correlation always exist and hence later approach provides less conservative analysis and accurate identification of activation regions.}, we adopted RFT approach to derive closed-form approximation (CFA) for intractable nearest neighbor distribution function $G$.
\item We introduce $G$ function into Laplace functional of SPPP to capture interference and subsequently derive CFA for average coverage probability of cellular user in D2D underlay network.
\item We propose $u$ and $r$ as interference control parameters to characterize intracell interference and analyze coverage probability of cellular user in underlay D2D network.
\end{enumerate}
\subsection{Mathematical Preliminaries}\label{math}
\subsubsection{Permanental Cox Process}
We define spatial point process $\Phi$ in terms of $n$th-order product density $\varrho^{(n)}$. The $\Phi$ is a random subset $X$ of underlying locally compact topological/parameter space $S$, a subspace of stratified manifold $M\subset \R^k$. The $\Phi$ is said to be PCP process if $X$ is poisson process with random intensity measure defined as \cite{moller_log_1998}:
\begin{align}
\Lambda(B) \overset{a.s.}=& \int_B \lambda(s) ds,	\IEEEnonumber
\\ =& \int_B \big[Y_1^2(s) + \cdots + Y_k^2(s)\big] ds,	\IEEEnonumber
\\ =& \int_B \chi^2_k ds,
\label{lamphip}
\end{align}
where $B \subseteq S$ is a Borel set, $\lambda(s)$ is random intensity function and $Y_{(\cdot)}(s)$ are $k$ independent Gaussian Random Fields (GRFs).
\subsubsection{Random Field (RF)}
An RF $f = f(t)$ on $M$ can be defined as a function whose values are random variables (RVs) for any $t \in M$ \cite{abrahamsen1997review}. This function is fully characterized by its finite-dimensional distributions (fidi) i.e.,
\begin{align}
F_{t_1,...t_k}(x_1,...,x_k) =& p\big(f(t_1) \le x_1,...,f(t_k) \le x_k\big).
\label{fidi}
\end{align}
In case, (\ref{fidi}) is multivariate Gaussian, $f$ is known as GRF. In real world, not all RFs are Gaussian. Non-Gaussian fields form very broader class and are not well defined. Here, we will consider RFs of the form $g(t) = G(f_m(t)) = G(f_1(t), . . . , f_k(t ))$ as non-Gaussian or Gaussian related RFs. In case $f_1(t),...,f_k(t)$ are zero mean and unit variance GRFs, we can define $\chi^2_k$-RF as \cite{worsley1994local}:
\begin{align}
g(t) =& \sum_{m = 1}^{k} f^2_m(t).
\label{chirf}
\end{align}
The marginal distribution of (\ref{chirf}) for each $t \in M$ is $\chi^2$ with $k$ degrees of freedom.
\subsubsection{Excursion Set}
The excursion set, $A_u$ above level $u \in \R$, of $k$-dimensional RF on $M$ is given as \cite{adler1976,3212831}:
\begin{align}
	A_u(f, M) \triangleq [t \in M: f(t) \ge u] \equiv f^{-1}([u, +\infty)).
\end{align}
The excursion set of a real-valued non-Gaussian RF can be defined by applying function composition $g = (G \circ f)$ on $M$. This set is equivalent to the excursion set of vector-valued Gaussian $f$ in $G^{-1}[u,+\infty)$, which, under appropriate assumptions on $G$, is a manifold with piece-wise smooth boundary given by \cite{random_2007}:
\begin{align}
A_u\big(g,M\big) =& A_u\big((G\circ f), M\big),	\IEEEnonumber
\\	     =& \{t \in M:(G \circ f)(t) \ge u\},		\IEEEnonumber
\\	    =&  \{t \in M: f(t) \in G^{-1}[u,+\infty)\},	\IEEEnonumber
\\	    =& M \cap f^{-1}(G^{-1}[u, +\infty)).
\label{excursionset}
\end{align}
Since, $G^{-1}[u, +\infty)$ is a specific stratified manifold in $\R^k$, we can generalize it to $D \subset \R^k$ in (\ref{excursionset})
\begin{align}
A_u(g,M) =& M \cap f^{-1}(D).
\label{excursionset1}
\end{align}
\subsubsection{Lipschitz-Killing Curvature Measures}
The Euler Characteristic $\mathcal{X}$ is a fundamental additive functional that counts topological components of $M$. In order to consider boundaries, curvatures, surface area, and volume of $M$, the position and rotation-invariant generalized functionals are considered which are known as Lipschitz-killing curvature measures. They are also known as geometric identifiers that capture intrinsic volume of $M$. For example, in case of $M \subset \R^2$, $\mathcal{L}_0 \equiv \mathcal{X}$, $\mathcal{L}_1$, $\mathcal{L}_2$ gives EC, boundary length, and area of manifold $M$. The Lipschitz-killing curvature measures $\mathcal{L}_j$, on $B_R^N$, $N$-dimensional ball of radius $R$, is given as \cite[Section 6.3]{random_2007}:
\begin{align}
\mathcal{L}_j (B_R^N)&= 
\begin{pmatrix}
           	N \\
           	j
         \end{pmatrix}
	R^j \frac{w_N}{w_{N-j}},
\label{intvol}
\end{align}
where $j$ has dimension $M$ (i.e., $j = |M|$) and $\omega_n$ is the volume of the unit ball in $\mathbb{R}^n$.
\subsection{Organization}
The rest of the paper is organized as follows: In Section II, we present system model of cellular network with underlay D2D communication. This is followed by PCP model, for spatial distribution of potential D2D nodes, and details to generate such a process. To validate the simulated realizations of PCP, we also derive CFA of $K$ and $L$ functions in this section. In Section III, we present the main result of approximating $G$ function of PCP. The CFA of $G$ function is derived based on expected EC of excursion set of $\chi^2_k$-RF. The CFA and empirical $G$ function are compared with SPPP. In Section IV, we introduce $G$ function into Laplace functional of SPPP to derive CFA for average coverage probability of cellular user. Conclusion is drawn in Section V.
\section{System Model}\label{sysmod}
In this section, we present cellular network model, PCP model, process generation, and validation using $K$ and $L$ functions.
\subsection{Cellular Network Model}
The cellular network comprises small cell BS (SBS) and MTs as shown in Fig. \ref{Figure:sm}. We consider orthogonal frequency division multiple access (OFDMA) based cellular network. In this network, we want to analyze maximum frequency reuse and the effect of interference due to D2D communication. In order to avoid coverage holes, SBS should provide homogeneous coverage for the cellular user. For the homogeneous coverage at each position in the coverage area, we consider one MT as cellular user which is distributed uniformly. All other MTs are potential D2D users distributed according to PCP process. 

The uplink resources of cellular user are shared by potential D2D users. The time division duplex (TDD) mode is assumed between D2D nodes to capture the effect of interference by both nodes. In case of frequency division duplex (FDD) mode, the interference at any time instant will simply be half that of TDD mode. The data and signaling is provided by the serving SBS to the cellular user whereas only signaling is assumed for potential D2D nodes. For average coverage probability of cellular user, interference is generated by all successful D2D pairs. We consider negligible interference at serving SBS from successful D2D nodes in neighboring SBSs due to negligibly small transmit power.
 
The cellular and potential D2D users are distributed in the coverage area bounded between SBS radius $R$ and the protection region $R_0$. The distance between SBS and cellular user is $r_c$ which is used to calculate path-loss at SBS. Every successful D2D pair has a distance of $r$ between nodes. The channel model assumes distance dependent path-loss and Rayleigh fading. The simple singular path-loss model $({r_c}^{-\alpha})$ is assumed where the protection region $R_0$ ensures the convergence of the model by avoiding nodes to lie at the origin. The received power at SBS follows exponential distribution. The distance $r_c$ follows uniform distribution \cite{7164238}:
\begin{align}
f(r_c)=\frac{2r_c}{R^2},f({\theta})=\frac{1}{2\pi},
\label{frc}
\end{align}
where $R_0\leq r_c \leq R$ and $0 \leq \theta \leq 2\pi$.
\begin{figure}[t]
\centering
\includegraphics[width = 1\columnwidth]{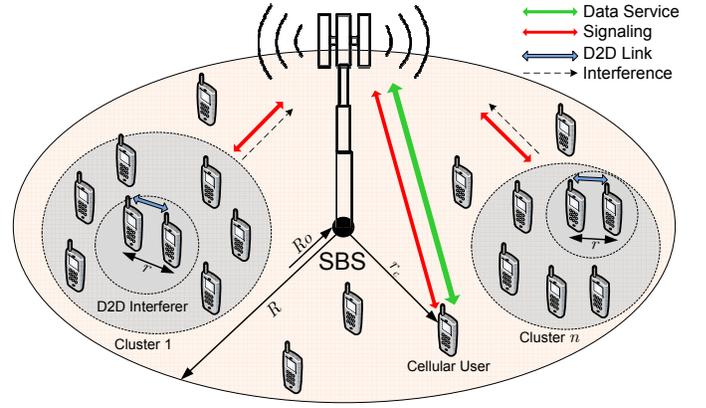}
\caption{\label{network} System model for PCP deployed MTs.\label{Figure:sm}
\vspace{-1mm}}
\end{figure}
\begin{figure*}[t]
\centering
    \subfigure[Intensity function of PCP resembles bell-like blobs of GRF.]
    {
        \includegraphics[width = 3.25 in]{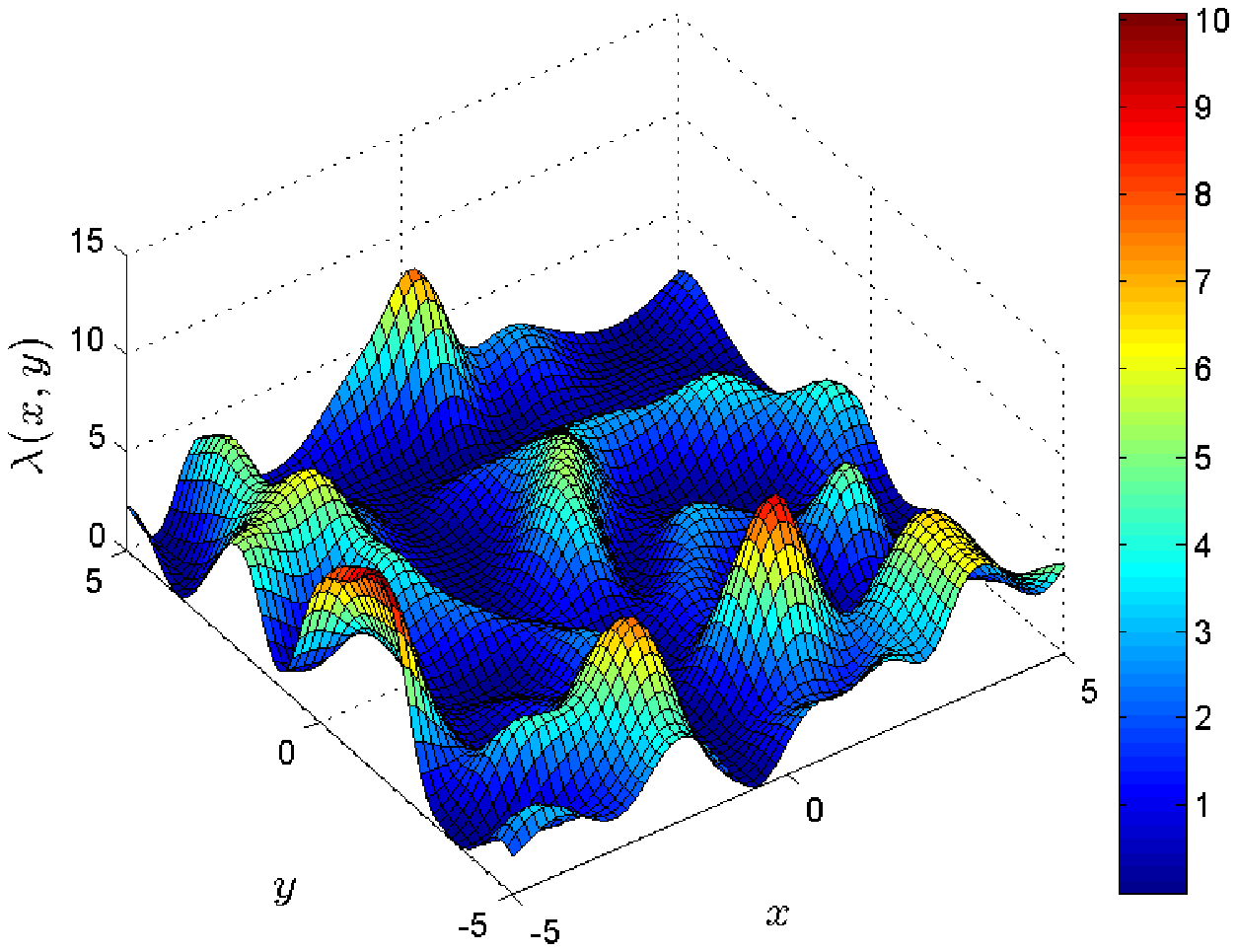}
        \label{Figure:fieldpcp}
    }
      \subfigure[Sampled realization of PCP shows high and low intensity regions.]
    {
        \includegraphics[width = 3.25 in]{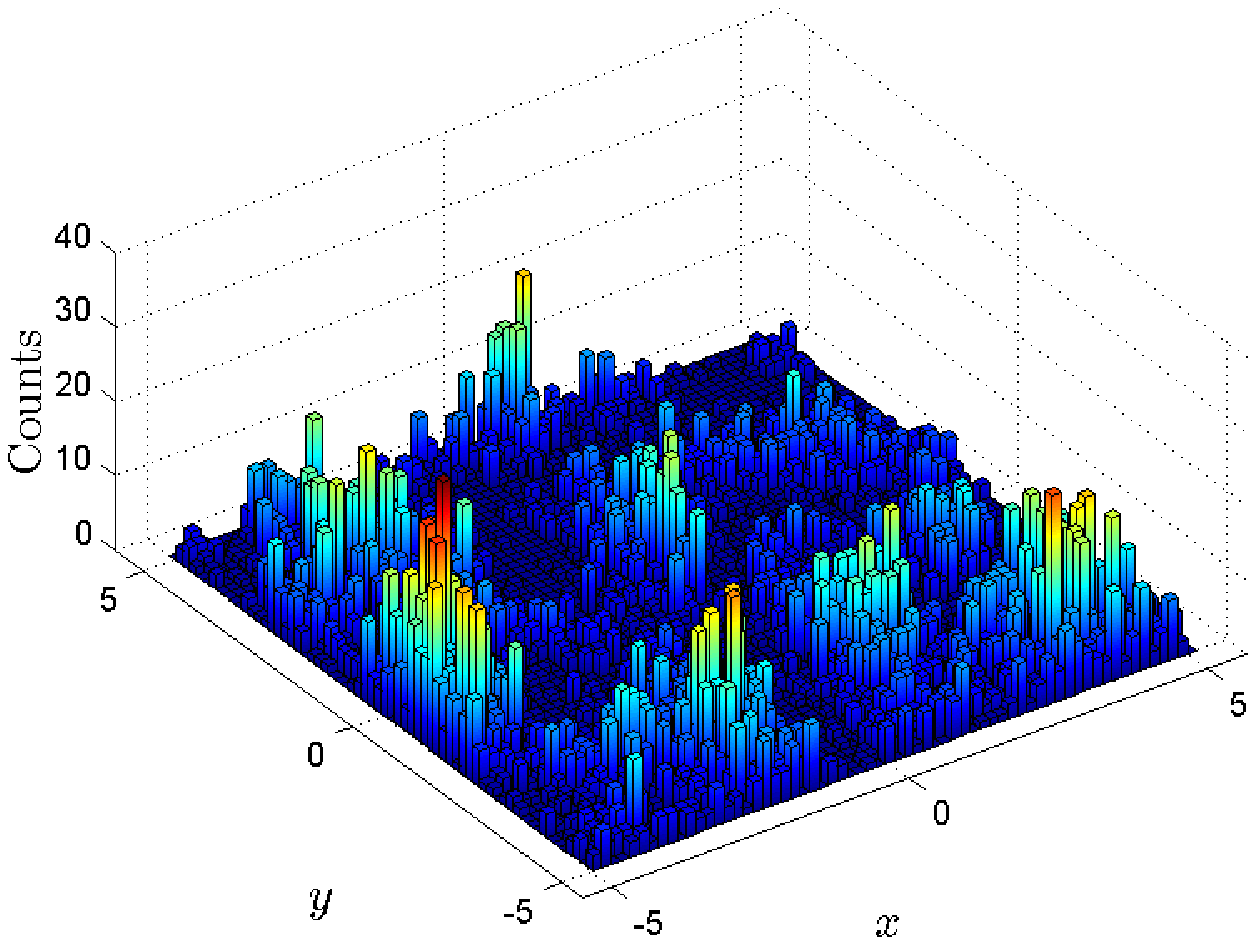}
        \label{Figure:hist3pcp}
    }
 \caption{Random intensity function of PCP process and sampled histogram.}
\label{Figure:fieldhist3}
\end{figure*}
\subsection{PCP Model}\label{pcpmodel}
The $n$th-order product density $\varrho^{(n)}$ of a Cox process is \cite{moller_2006}:
\begin{align}
\varrho^{(n)} (s_1,...,s_n) = \mathrm{E} \prod_{i = 1}^{n} \Lambda(s_i),
\end{align}
where $\Lambda(\cdot)$ is a random intensity measure. In order to model spatially correlated process for potential D2D nodes, we consider PCP with the following intensity measure:
\begin{align}
\Lambda(s_i) = Y_1^2(s_i) + \cdots + Y_k^2(s_i),
\label{impcp}
\end{align}
where $Y_{(\cdot)}(\cdot)$ are zero mean unit variance $k$ independent real-valued stationary GRFs and $Y_{(\cdot)}^2(\cdot) \sim \chi^2(\cdot)$ with unit df.

The sum of independent chi-square distributions has remarkable property given by the following theorem \cite{ross_introduction_2009}:
\begin{theorem}
The sum of $k$ independent chi-square distributions with $v_i$ df follows a chi-square distribution with $\sum_{i=1}^k v_i$ df i.e.,
\begin{align}
Z =& \chi^2_{v_1}(s_i)+...+\chi^2_{v_k}(s_i),	\IEEEnonumber
\\ & \sim \chi^2_{v_1+...+v_k}(s_i).
\label{z}
\end{align}
\end{theorem}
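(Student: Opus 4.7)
The plan is to establish this standard identity via moment generating functions (MGFs), which avoids any convolution calculation. First I would recall that a $\chi^2_{v_i}$ random variable has MGF $M_{v_i}(t) = (1-2t)^{-v_i/2}$, valid for $t<1/2$; this itself is most quickly derived by writing $\chi^2_{v_i}$ as a sum of $v_i$ iid squared standard Gaussians and computing $\E[e^{tZ^2}] = (1-2t)^{-1/2}$ by completing the square against the standard normal density.

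Next, because the $k$ summands in (\ref{z}) are independent by hypothesis, the MGF of $Z$ factorizes:
\begin{align}
M_Z(t) \;=\; \prod_{i=1}^{k} (1-2t)^{-v_i/2} \;=\; (1-2t)^{-(v_1+\cdots+v_k)/2}.
\end{align}
I would then recognize the right-hand side as the MGF of a chi-square with $\sum_{i=1}^k v_i$ degrees of freedom, and invoke the uniqueness theorem for MGFs (both sides are finite on the common open interval $t<1/2$) to conclude $Z \sim \chi^2_{v_1+\cdots+v_k}(s_i)$, which is exactly (\ref{z}).

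An alternative route, perhaps more natural given the paper's explicit construction in (\ref{impcp}) where each unit-df chi-square is the square of a zero-mean unit-variance GRF, is to expand each $\chi^2_{v_i}(s_i)$ directly as a sum of $v_i$ squared independent standard Gaussians and then concatenate these into a single pool of $\sum_{i=1}^{k} v_i$ squared independent standard Gaussians; by the very definition of the chi-square distribution, this pooled sum is $\chi^2_{v_1+\cdots+v_k}(s_i)$.

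There is no real obstacle here: the statement is a textbook identity. The only delicate point, if any, is ensuring \emph{mutual} (rather than merely pairwise) independence of the pooled underlying Gaussians, which is automatic provided the $k$ original $\chi^2_{v_i}(s_i)$ summands are realized from independent Gaussian families, exactly as assumed in the PCP construction of (\ref{impcp}).
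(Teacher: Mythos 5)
Your proof is correct. The paper itself gives no proof of this theorem---it simply cites a standard probability textbook (Ross)---and your moment-generating-function argument, $M_Z(t)=\prod_i (1-2t)^{-v_i/2}=(1-2t)^{-\sum_i v_i/2}$ together with MGF uniqueness, is precisely the standard textbook derivation; your alternative of pooling the $\sum_i v_i$ underlying squared independent Gaussians is equally valid and, as you note, meshes naturally with the construction in (\ref{impcp}). The one caveat you flag---that mutual independence of all underlying Gaussians is needed, not just independence of the $k$ chi-square summands---is handled correctly, since the theorem's hypothesis of independent chi-square variables already suffices for the MGF factorization without reference to any Gaussian representation.
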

Each squared GRF in (\ref{impcp}) has unit df ($v_i =1$); this results in $Y_{(\cdot)}^2(s_i) \sim \chi^2_k(s_i)$. Therefore, the intensity measure of PCP is governed by $\chi^2_k$-RF with $k$ df as:
\begin{align}
\Lambda(s_i)  =&  \chi^2_k(s_i).
\label{impcp1}
\end{align}
Since the distribution of potential D2D nodes is translation and motion invariant, we can assume stationary PCP and hence borrow the definition from \cite{moller_log_1998}:
\begin{defn}
A Permanental Cox Process is stationary if and only if the underlying GRF is stationary.
\end{defn}

The stationarity of GRF is ensured by underlying covariance function. In order to generate smooth GRFs, we consider squared exponential covariance function
\cite{rasmussen_gaussian_2006}
\begin{align}
	C(s_1, s_2) =  e^{-\frac{||s||^2}{2 l^2}},
	\label{cov}
\end{align}
where $||s|| = ||s_1-s_2|| = \sqrt{(s_{1_x}-s_{2_x})^2+(s_{1_y}-s_{2_y})^2}$ is the Euclidean distance between $s_1$ and $s_2$, and $l$ is the characteristic length-scale. The resulting covariance matrix of PCP is represented by $[C] (s_1,..., s_n)$.

The fact that PCP is a type of Permanental point process is due to the following theorem \cite{moller_2006}:
\begin{theorem}
The $n$th-order product density of Cox process is equal to the weighted permanent of the covariance matrix i.e.,
\begin{align}
\varrho^{(n)} (s_1,...,s_n) = \textnormal{per}_{\alpha} [C] (s_1,..., s_n).
\label{pd}
\end{align}
\end{theorem}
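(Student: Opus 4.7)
The plan is to prove Theorem 2 directly from the definition of $\varrho^{(n)}$, using the explicit form (\ref{impcp}) of the PCP intensity measure and Isserlis' (Wick's) theorem to identify the resulting Gaussian moments with the weighted permanent.

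First I would substitute $\Lambda(s_i) = \sum_{j=1}^{k} Y_j^2(s_i)$ into $\varrho^{(n)}(s_1,\ldots,s_n) = \E\prod_{i=1}^n \Lambda(s_i)$ and expand, obtaining
$$
\varrho^{(n)}(s_1,\ldots,s_n) = \sum_{j_1,\ldots,j_n = 1}^{k} \E\prod_{i=1}^n Y_{j_i}^2(s_i).
$$
By mutual independence of the $k$ GRFs the expectation factorizes across the $k$ components; for each component $m$ the contribution reduces to a $2|I_m|$th moment of a centered stationary Gaussian process, where $I_m = \{\, i : j_i = m \,\}$ and the covariance on these sites is inherited from (\ref{cov}).

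Next I would apply Isserlis' theorem to each such Gaussian moment, writing it as a sum over all perfect matchings of the multiset in which every site $s_i$ with $i \in I_m$ appears twice, each matching contributing a product of entries of $[C]$. Summing back over the assignment $(j_1,\ldots,j_n) \in \{1,\ldots,k\}^n$ and grouping the resulting pairings according to the permutation $\sigma \in S_n$ they induce on the $n$ sites yields a factor $\alpha^{\nu(\sigma)}$, where $\nu(\sigma)$ is the number of cycles of $\sigma$ and the weight $\alpha$ is fixed by $k$. This reassembles into
$$
\textnormal{per}_{\alpha}[C](s_1,\ldots,s_n) = \sum_{\sigma \in S_n} \alpha^{\nu(\sigma)} \prod_{i=1}^n C(s_i,s_{\sigma(i)}),
$$
proving the claim.

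The hard part will be the combinatorial bookkeeping that turns Isserlis pairings summed over all component assignments into the cycle-indexed expression defining $\textnormal{per}_{\alpha}$: one must verify that each perfect matching of the doubled multiset corresponds to a permutation of the $n$ sites whose cycle structure matches the loop structure of the matching, and that counting the consistent component labelings along each cycle produces exactly the weight $\alpha^{\nu(\sigma)}$. A cleaner route, which I would adopt if the direct combinatorics becomes unwieldy, is to invoke the classical Gaussian-moment characterization of permanental processes (see, e.g., \cite{moller_2006,j_hough_determinantal_2006}) and specialize it to the real $\chi^2_k$-RF in (\ref{impcp1}) to read off $\alpha$ explicitly in terms of $k$.
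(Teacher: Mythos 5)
The paper does not actually prove this statement: its ``proof'' is a one-line citation to McCullagh and M{\o}ller \cite{moller_2006}, so you are supplying an argument the paper omits. Your route --- expand $\E\prod_i\Lambda(s_i)$ with $\Lambda=\sum_j Y_j^2$, factor over the $k$ independent components, apply Isserlis' theorem, and regroup pairings by the permutation they induce --- is essentially the standard derivation underlying the cited result, and the combinatorial skeleton is sound: each Wick pairing of the doubled site multiset does correspond to a permutation $\sigma$ whose cycles each live inside a single component, and summing over component assignments contributes $k^{\nu(\sigma)}$.

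There is, however, a genuine normalization gap in the final step. Isserlis' theorem applied to a single unit-variance component gives
\begin{align}
\E\prod_{i=1}^{n}Y^2(s_i) \;=\; \sum_{\sigma\in S_n} 2^{\,n-\nu(\sigma)}\prod_{i=1}^{n}C(s_i,s_{\sigma(i)}),	\IEEEnonumber
\end{align}
i.e.\ each cycle of length $\ell$ is realized by $2^{\ell-1}$ distinct pairings, contributing the factor $2^{\,n-\nu(\sigma)}$ that your sketch never accounts for. Combining this with the $k^{\nu(\sigma)}$ component labelings yields $\sum_\sigma (k/2)^{\nu(\sigma)}\prod_i 2C(s_i,s_{\sigma(i)}) = \textnormal{per}_{k/2}[2C]$, \emph{not} $\textnormal{per}_{\alpha}[C]$ as you claim. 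To land on the stated identity with $\alpha=k/2$ you must either take the component GRFs to have covariance $C/2$ (as in McCullagh--M{\o}ller's construction) or accept that the kernel inside the permanent is $2C$. A sanity check exposes the issue: with unit-variance components, $\E\Lambda(s)=k=2\alpha$, whereas the paper's own first-order density (\ref{rhos}) asserts $\varrho=\alpha C(0)=\alpha$. So the theorem as stated, paired with the unit-variance convention of (\ref{impcp}) and (\ref{cov}), is off by exactly the factor of two your bookkeeping would have caught; fixing the convention is necessary before your final display is correct.
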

\begin{proof}
See, \cite[Sec. 2.1.1, p. 876, Theorem 1]{moller_2006}.
\end{proof}
The boson (or photon) process corresponds to $\alpha$ = 1 \cite{Macchi1975} resulting in $k = 2\alpha$ df for underlying GRFs of PCP.
\subsection{Process Generation}\label{pg}
The random intensity measure of PCP is governed by $\chi^2_k$-RF which is non-Gaussian or Gaussian related RF. This field is generated by squaring the component field which is GRF (as discussed in Section \ref{pcpmodel}). The GRF is a collection of RVs with fidi as multivariate Gaussian. Therefore, it can be generated by drawing real valued multivariate normal random vectors and mapping it to the underlying grid. It can also be generated via circulant embedding method \cite{spodarev_stochastic_2013}. We followed the former approach to generate RFs and subsequently PCP process. The $\chi^2_k$-RF of PCP comprises large number of $\chi^2$ RVs which are mapped to each grid point $s \in S$. Due to smooth underlying covariance structure, each RV results into smooth sample path. The blobs and holes show spatial covariance between $\chi^2$ RVs. The overall shape of intensity measure of PCP is similar to symmetric bell-like blobs of GRF, however, the loss of symmetry in this case is due to low df of $\chi^2_k$-RF. For large df, due to central limit theorem, the intensity measure of PCP resembles symmetric bell-like blobs of GRF.

The lattice representation of $\chi^2_k$-RF with 2 df is shown in Fig. \ref{Figure:fieldpcp}. In this figure, we can see number of blobs and holes which, respectively, show high and low intensity areas. The high intensity areas (7 $\rightarrow$ 10 on colorbar) capture strong spatial correlation between points and results in group clustering whereas low intensity areas (1 $\rightarrow$ 3 on colorbar) form holes due to nonexistence of any homophilic relation. The Markov chain Monte Carlo based Metropolis-Hasting (MH) sampler\footnote{The MH sampler is used to sample RVs from multidimensional spaces. The states of the underlying Markov chain can be updated in two different ways, (i) Block-wise, (ii) Component-wise. The first approach updates all state variables simultaneously whereas the second approach iterates with component-wise update. In both the cases, the acceptance probability is $\alpha = \textnormal{min}\bigg(1, \frac{p(\pmb \theta^*)}{p(\pmb \theta^{(t-1)})} \frac{q(\pmb \theta^{(t-1)}|\pmb \theta^*)}{q(\pmb \theta^*|\pmb \theta^{(t-1)})}\bigg)$, where $p(\pmb \theta)$ and $q(\pmb \theta)$ stand for proposal and target distributions, respectively \cite{martinez_computational_2007}.} is used to sample PCP points under $\chi^2_k$-RF as shown in Fig. \ref{Figure:hist3pcp} which shows inhomogeneous and clustered distribution of points.
\subsection{Summary Statistics: $K$ and $L$ Functions}
The Ripley's $K$ function and variance stabilized $L$ functions are, respectively, given as \cite{Ripley}:
\begin{align}
	K(r) =& \int_0^{r} g(s) 2 \pi s ds,
	\label{Kfunc}
\\	L(r) =& \sqrt{\frac{K(r)}{\pi}},
	\label{Lfunc}
\end{align}
where $r$ is the distance and $g(s)$ is the pair correlation function.

Using (\ref{pd}), the first and second order product densities can be derived as \cite{mccullagh2005permanent}:
\begin{align}
	\varrho = \alpha C(0),	\quad\quad	\varrho^{(2)}(s) = \alpha \big[1 + C^2(s)\big].
	\label{rhos}
\end{align}
Since $g(s) = \varrho^{(2)}(s)/\varrho^2$, the pair correlation function of PCP for $\alpha$ = 1 is given by:
\begin{align}
	g(s) =& 1 + C^2(s).
	\label{gr}
\end{align}
The corresponding $K$ and $L$ functions can be derived as:
\begin{align}
	K(r) =& \pi r^2 + \pi l^2(1 - e^{-(\frac{r}{l})^2}),	\IEEEnonumber
	\\
	L(r) =& \sqrt{r^2 + l^2(1 - e^{-(\frac{r}{l})^2})}. 		\IEEEnonumber
\end{align}

The SPPP is a special case of PCP with $g(s) = 1$, $K(r) = \pi r^2$, and $L(r) = r$.
If we assume complete spatial independence, the covariance between $s_1$ and $s_2$ vanishes and product densities from (\ref{rhos}) reduces to:
\begin{align}
	\varrho = \alpha,	\quad\quad	\varrho^{(2)}(s) = \alpha \big[1 + 0\big].	\IEEEnonumber
\end{align}
The corresponding pair correlation function (for $k = 2$ i.e., $\alpha = 1$) is 1. The $K$ and $L$ functions for SPPP can be validated as $K(r) = \pi r^2$ and $L(r) = r$, respectively.
\begin{figure}[t]
\centering
\includegraphics[width=1\columnwidth]{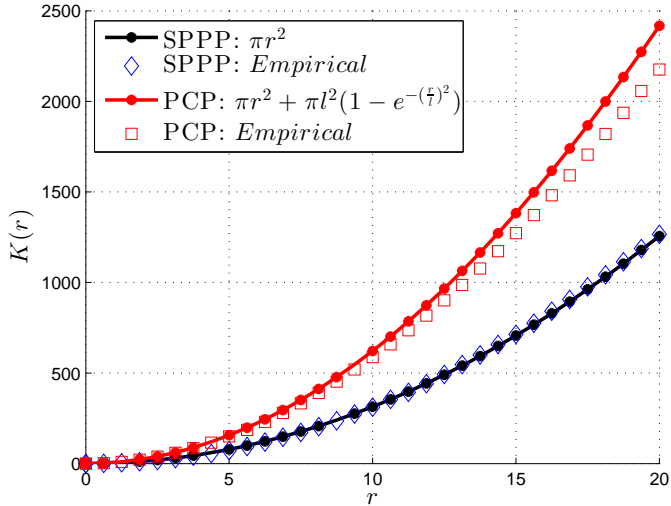}
\caption{CFA and empirical approximations for $K$ function of SPPP and PCP.}\label{Figure:kfunc}
\end{figure}

The estimated $\hat{K}$ function for SPPP and PCP is, respectively, given as \cite{marcon2003generalizing}:
\begin{align}
	\hat{K}_{SPPP}(r) =& \frac{1}{\lambda n} \sum_{i=1}^{n} \sum_{j=1, i\neq j}^{n} I(s_i,s_j,r),	\IEEEnonumber
\end{align}
\begin{align}
	\hat{K}_{PCP}(r) =& \frac{1}{n} \sum_{i=1}^{n} \sum_{j=1, i\neq j}^{n} \frac{I(s_i,s_j,r)}{\lambda(s_i) \lambda(s_j)},	\IEEEnonumber
\end{align}
where $\lambda$ is the constant intensity function of SPPP, $\lambda(\cdot)$ is the random intensity function of PCP and $I(\cdot)$ is the indicator function for the distance $r$ between points $s_i$ and $s_j$. The analytic expression of $K$ and $L$ functions and empirical estimates are shown in Fig. \ref{Figure:kfunc} and Fig. \ref{Figure:lfunc}, respectively. As an illustration, the plot is shown for the value of $l$ = 50. This parameter captures the length-scale of the underlying sample path. In modeling problem, it can be used to incorporate the level of covariance in points of the process. We can see that the estimates of $K$ and $L$ functions matches CFA. In these figures, SPPP serves as a benchmark with zero correlation between points. The positive spatial correlation of PCP can be verified by upper drift of $K$ and $L$ functions. In case of negative spatial correlation, the $K$ and $L$ functions shall lie below SPPP curves as can be seen in \cite{7155510,6841639,7037373}.

\section{Nearest Neighbor Distribution Function}\label{retprob}
In this section, we approximate $G$ function using topological inference based on expected EC and Poisson clumping heuristic \cite{1428384_Cao}.
\subsection{Nearest Neighbor Distribution Function}
The nearest neighbor distribution function of a point process is given as:
\begin{align}
	G(r) =& 1 - \E \big[e^{- \Lambda(B^N_r)}\big],	\IEEEnonumber
	\\    =&  1 - \E \bigg[e^{-\int_{B^N_r} \lambda(s) ds}\bigg],
\label{grpp}
\end{align}
where $\Lambda$ and $\lambda$ are, respectively, intensity measure and function of point process over closed ball of radius $r$ at arbitrary position.

In case of SPPP, $\lambda$ is constant and hence (\ref{grpp}) can be simplified as:
\begin{align}
	G(r) =& 1 - e^{- \lambda \pi r^2}.
\label{grsppp}
\end{align}
\begin{figure}[t]
\centering
\includegraphics[width = 1\columnwidth]{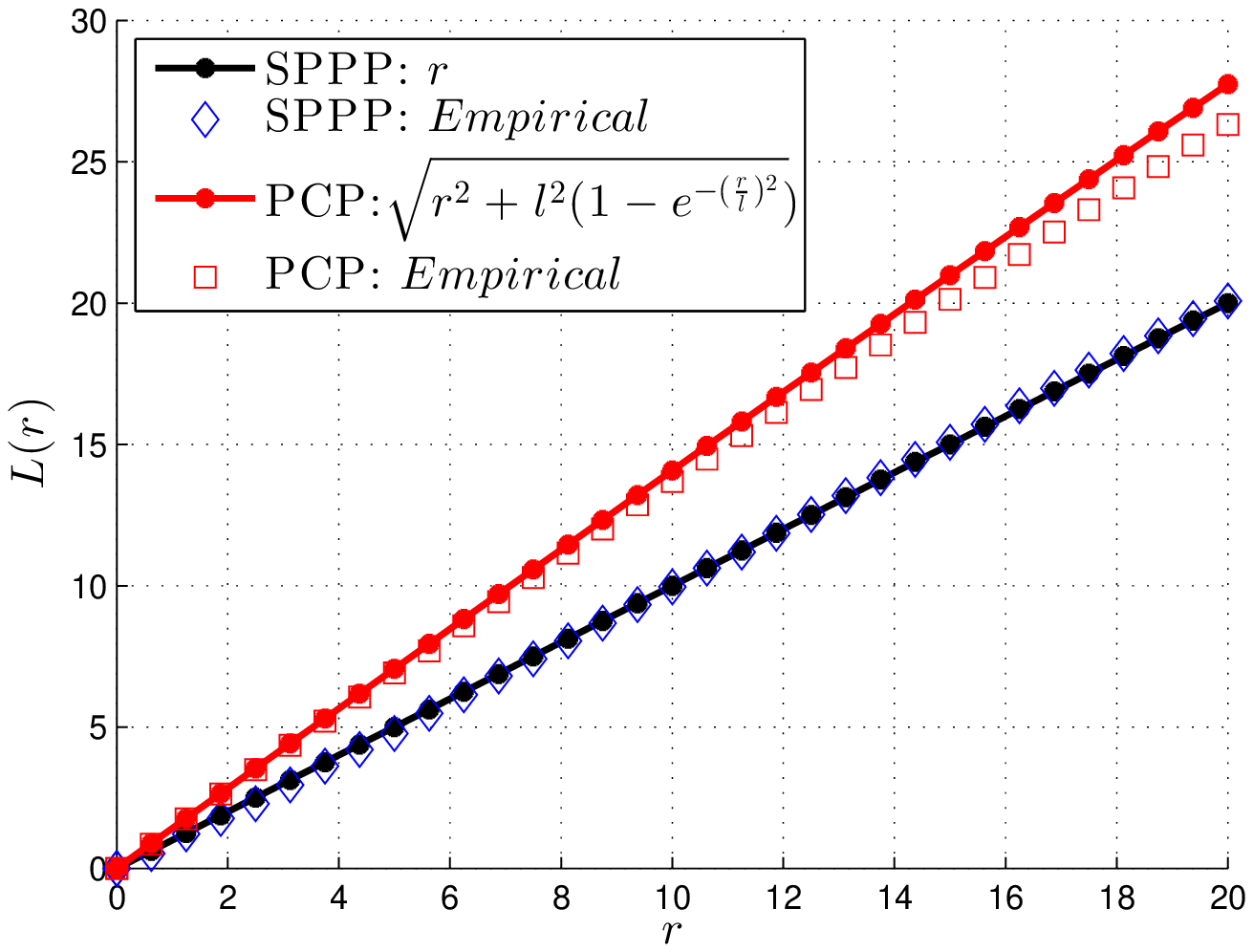}
\caption{CFA and empirical approximations for $L$ function of SPPP and PCP.}\label{Figure:lfunc}
\end{figure}

Considering $\Lambda$ from (\ref{impcp1}), the nearest neighbor distribution function of PCP is given as:
\begin{align}
	G(r) =& 1- \E \bigg[e^{-\int_{B^N_r} \chi^2_k \,ds}\bigg],	\IEEEnonumber
	 \\    =& 1- \E \bigg[e^{-\int_{B^N_r} \int_{v_1} \cdot\cdot\cdot \int_{v_n} \chi^2 dv_1\cdot\cdot\cdot dv_n \, ds}\bigg].
\label{grpcp}
\end{align}
Since $\chi^2_k$-RF is a collection of large number of RVs, this results in evaluation of nested integrals over $B^N_r$ which is mathematically intractable. In this case, we approximate intensity measure $\Lambda$ using expected EC of excursion set of $\chi^2_k$-RF \cite{random_2007}.
\subsubsection{Approximation of Intensity Measure}\label{appim}
The expected intensity measure in (\ref{grpcp}) can be estimated by making topological inference of average number of upcrossings of $\chi^2_k$-RF above level $u$ of excursion set. This approach is based on Gaussian Kinematic Formulae (GKF) given as \cite[Theorem 15.9.4]{random_2007}:
\begin{theorem}\label{th4}
Let $M$ be an $N$-dimensional, regular stratified manifold, $D$ a regular stratified subset of $\R^k$. Let $f = (f_1, ..., f_k) : M \rightarrow \R^k$ be a vector-valued Gaussian field, with independent, identically distributed components and $f$ being Morse function\footnote{For the definition of Morse function and Morse's Theorem, refer \cite[Theorem 4.4.1, pp. 87]{adler_geometry_2009} and \cite[Section 9.3, Definition 9.3.1, pp. 206]{martinez_computational_2007}.} over $M$ with probability one. Then
\begin{align}
\E\big[\mathcal{L}_i(M \cap f^{-1}(D))\big] =& \sum_{j=0}^{N-i}
	\begin{bmatrix}
           	i+j \\
           	j
         \end{bmatrix}
	\frac{\mathcal{L}_{i+j}(M) \mathcal{M}_j(D)}{(2\pi)^{\frac{j}{2}} },
\label{gkf}
\end{align}
where $\mathcal{L}_{i+j}$ for $i = 0, ..., N , j = 0, ..., N-i$, are Lipschitz-Killing curvature measures on $M$ with respect to the metric induced by $f$ and $\mathcal{M}_j$ are the generalized (Gaussian) Minkowski functionals on $\R^k$.
\end{theorem}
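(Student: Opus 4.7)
The plan is to follow the classical Adler--Taylor programme: represent the Lipschitz--Killing measures of the random set $M \cap f^{-1}(D)$ via Morse theory, take expectations with the Kac--Rice metatheorem, and then recognise the resulting deterministic integrals through Weyl's tube formula, applied twice---once on $M$ with the metric induced by $f$, and once in Gauss space on $D \subset \R^k$.

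I would begin with the base case $i = 0$, where $\mathcal{L}_0$ coincides with the Euler characteristic $\mathcal{X}$. Using Morse's theorem applied to a suitable height-type function restricted to $M \cap f^{-1}(D)$, $\mathcal{X}$ decomposes, stratum by stratum, into an alternating count of non-degenerate critical points; the hypothesis that $f$ is almost surely Morse guarantees this count is well-defined with probability one. Taking expectations via the Kac--Rice metatheorem then converts this random count into a deterministic integral over $M$ of a conditional expectation of the absolute determinant of a Hessian-type matrix, weighted by the joint Gaussian density of $(f,\nabla f, \nabla^2 f)$ at each stratum point.

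Because $f$ has independent and identically distributed Gaussian components, the joint density factorises at each $s \in M$ into one piece depending only on the Riemannian metric that $f$ induces on $M$ and a separate Gaussian density on $\R^k$ evaluated at the event that $f(s)$ lies on the codimension-$j$ stratum of $\partial D$. The integrand then splits as a product: the $M$-factor, assembled via Weyl's tube formula on $M$ in the $f$-induced metric, yields $\mathcal{L}_{i+j}(M)$; the $\R^k$-factor, via Weyl's tube formula in Gauss space applied to $D$, becomes $\mathcal{M}_j(D)/(2\pi)^{j/2}$. The combinatorial weight $\binom{i+j}{j}$ arises from choosing which $j$ of the $i+j$ normal directions are transverse to $\partial D$ versus tangential within $M$. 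To pass from $i=0$ to general $i$, I would invoke either a Crofton-type slicing identity---integrating the $i=0$ formula over random codimension-$i$ affine flats recovers the higher $\mathcal{L}_i$---or the direct Morse-theoretic representation of $\mathcal{L}_i$ as a signed count of critical points of height functions restricted to appropriate skeleta.

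The principal obstacle will be the clean separation-of-variables step: one must verify that the Kac--Rice conditional expectation factorises exactly across the geometry of $M$ and of $D$ under the stated stationarity and independence assumptions, and that the contributions of the stratified boundary of $D$ assemble into the Gaussian Minkowski functionals $\mathcal{M}_j$ rather than into some other mixture of intrinsic volumes. Compounding this, the boundary strata of $M$ must be counted with the correct indices, and the Morse non-degeneracy must descend consistently to every stratum of both $M$ and $D$. That bookkeeping---essentially verifying that the two Weyl expansions align index-for-index with the Kac--Rice output---is where the depth of the theorem resides, and it is the step I expect to consume the bulk of a full proof.
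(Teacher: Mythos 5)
The paper does not prove this statement at all: it is imported verbatim as the Gaussian Kinematic Formula from \cite[Theorem 15.9.4]{random_2007}, so there is no in-paper proof to compare against. Your outline---Morse-theoretic critical-point counts for the Euler characteristic, expectation via the Kac--Rice metatheorem, factorisation of the Gaussian density into an $M$-geometry piece and a Gauss-space piece on $D$, two applications of Weyl's tube formula, and a Crofton-type step to pass from $\mathcal{L}_0$ to general $\mathcal{L}_i$---is a faithful high-level summary of exactly how that reference proves the result, with only the minor caveat that the coefficients in (\ref{gkf}) are flag coefficients $\bigl[\begin{smallmatrix} i+j \\ j \end{smallmatrix}\bigr]$ (carrying the $\omega_a$ normalisations) rather than ordinary binomials, and that your text is a programme rather than a worked proof.
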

For notational convenience, we assume the combinatorial flag coefficients $\begin{bmatrix}
           	i+j \\
           	j
         \end{bmatrix} = \begin{bmatrix}
           	a \\
           	b
         \end{bmatrix}$ given as:
\begin{align}
	\begin{bmatrix}
		a\\
		b
	\end{bmatrix} = \frac{[a]!}{[b]! [a-b]!}, \quad [a]! = a! \omega_{a},\quad \omega_{a} = \frac{\pi^{\frac{a}{2}}}{\Gamma(\frac{a}{2}+1)}.	\IEEEnonumber
\end{align}

Using Theorem \ref{th4} and putting $M \cap f^{-1}(D)$ from (\ref{excursionset1}), the expected intensity measure can be approximated as follows:
\begin{align}
\psi_0 \approx & \E \big[\chi^2_k(B^N_r)\big],	\IEEEnonumber
\\ =& \E \big[\mathcal{L}_{0}\big(A_u(\chi^2_k, B^N_r)\big)\big],	\IEEEnonumber
\\ =& \sum_{j=0}^{N} \frac{\mathcal{L}_{j}(B^N_r) \mathcal{M}_j(D)}{(2\pi)^{\frac{j}{2}} }.
\label{Elambda}
\end{align}
The Minkowski functionals $\mathcal{M}_j(D)$ can be transformed into EC density for $\chi^2_k$-RF as \cite{random_2007}:
\begin{align}
\psi_0 =& \sum_{j=0}^{N} \rho_j(u) \mathcal{L}_{j}(B^N_r),	
\label{Elambda1}
\end{align}
where
\begin{align}
\rho_j(u) =& \frac{u^{\frac{k-j}{2}}e^{-\frac{u}{2}}}{(2\pi)^{\frac{j}{2}}\Gamma(\frac{k}{2})2^{\frac{k-2}{2}}} \sum_{l=0}^{\floor{\frac{j-1}{2}}} \sum_{m=0}^{j-1-2l}	\IEEEnonumber
\\	& \times \mathbbm{1}_{\{k\ge j-m-2l\}} 	\begin{pmatrix}
		k-1\\
		j-1-m-2l
	\end{pmatrix}	\IEEEnonumber
\\ 	& \times \frac{(-1)^{j-1+m+l} (j-1)!}{m!l!2^l}\:u^{m+l}.\IEEEnonumber
\end{align}

The EC density over $\R^2$ and average upcrossings of $\chi^2_k$-RF are shown in Fig. \ref{Figure:ecdensity} and Fig. \ref{Figure:avgupcross}, respectively.
\begin{figure}[t]
\centering
\includegraphics[width=1\columnwidth]{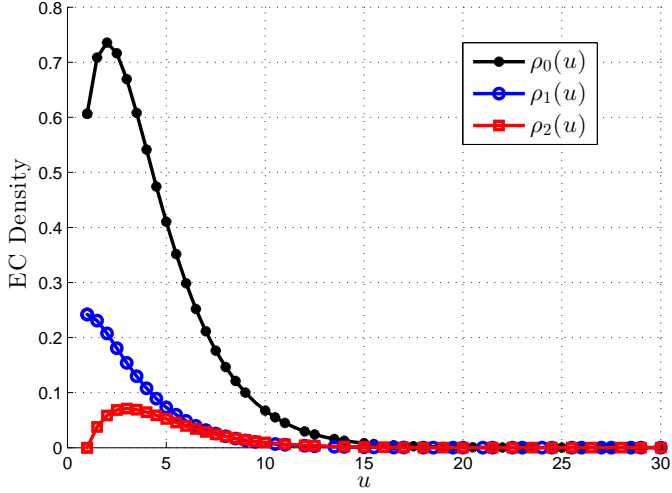}
\caption{EC density representing density of unit component of $\chi^2_k$-RF that survives threshold $u$.}\label{Figure:ecdensity}
\end{figure}
\begin{figure}[t]
\centering
\includegraphics[width = 1\columnwidth]{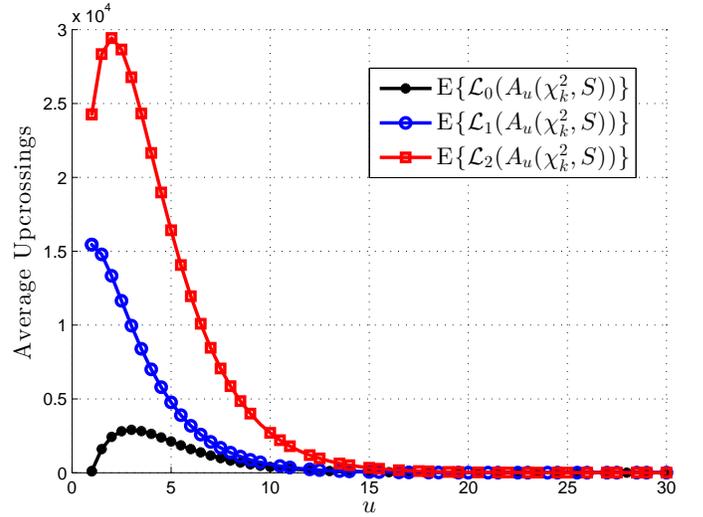}
\caption{Average number of components of $\chi^2_k$-RF for different Lipschitz-killing curvature measures over 200 $\times$ 200 square grid.}\label{Figure:avgupcross}
\end{figure}
In Fig. \ref{Figure:ecdensity}, a very interesting fact can be observed for $\rho_0$. The value of $j = 0$ transforms $N$ dimensional Manifold $M$ to a single point where spatial correlation does not make sense. In this case, EC density reduces to $\chi^2$ distribution with $k$ df conforming the fact that marginal distribution of $\chi^2_k$-RF is $\chi^2$ distributed \cite{worsley1994local}. In Fig. \ref{Figure:avgupcross}, the behavior of $\chi^2_k$-RF for different $u$ is plotted with respect to different $\mathcal{L}_{i}$ measures. In this paper, we consider $\mathcal{L}_0$ (i.e., EC) to approximate intensity measure of PCP process.
\subsubsection{Poisson Clumping Heuristic\protect \footnote{At high thresholds, the clusters in the excursion set can be regarded as multidimensional point process with no memory and hence they behave as poisson clumps \cite{aldous_probability_2013}.}}
To approximate nearest neighbor distribution function, we consider probability of getting one, or more, clusters (D2D pairs) with spatial extent $r$, or more, above threshold $u$. The general expression for this cluster level inference is given as \cite{friston_statistical_2007,mazziotta_brain_2000}:
\begin{align}
G(r) \simeq 1 - e^{- \psi_0 \, p(v \ge r)}.
\label{pumc}
\end{align}
The volume $v$ of clusters (D2D pairs) over spatial extent $r$ is distributed according to \cite{friston_assessing_1994}:
\begin{align}
p(v \ge r) \approx e^{\big(-\beta r^{\frac{2}{N}}\big)},
\label{pvgek}
\end{align}
where
\begin{align}
\beta = \bigg(\frac{\Gamma(\frac{N}{2}+1)}{\eta}\bigg)^{\frac{2}{N}},	\IEEEnonumber
\end{align}
and $\eta = {\rho_0}^v \mathcal{L}_N/\psi_0$ is the expected volume of each cluster. 

The plots of $\eta$ and $p(v\ge r)$ are shown in Fig. \ref{Figure:eeta} and Fig. \ref{Figure:probvgek}, respectively.
\begin{figure}[t]
\centering
\includegraphics[width = 1\columnwidth]{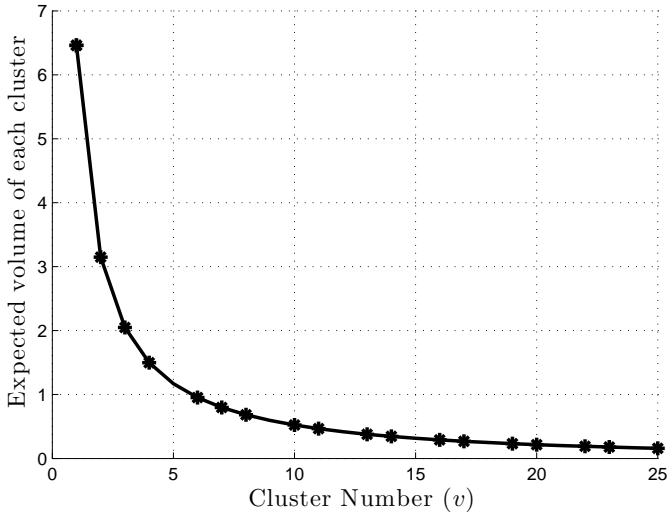}
\caption{Expected volume of each cluster depends on EC density, volume of underlying space, and expected EC. The plot is drawn for $u$ = 31.}\label{Figure:eeta}
\end{figure}
In these figures, we can see that the maximum expected volume and the probability to have nodes with spatial extent ($\ge) r$ occurs for $v$ = 1. In deriving CFA of $G$ function, we, consider $\eta$ for $v$ = 1.

Using (\ref{intvol}), the Lipschitz-Killing curvature measures $\mathcal{L}_j$ over ball of spatial extent $r$ can be derived as:
\begin{subequations}
\begin{align}
\mathcal{L}_0 &= 1	,
\\ \mathcal{L}_1 &= 2 \sqrt{\pi} r \frac{\Gamma(\frac{1}{2}+1)}{\Gamma(2)},
\\ \mathcal{L}_2 &= \pi r^2 \frac{\Gamma(1)}{\Gamma(2)}.
\end{align}
\label{intvol1}
\end{subequations}
Considering (\ref{intvol1}), (\ref{Elambda1}), and (\ref{pvgek}) in (\ref{pumc}) for, at most, distance r, the $G$
function can be approximated. The plot of $G$ function can be seen in Fig. \ref{Figure:gfunc} where PCP points, due to positive spatial correlation, have higher probability of D2D pairs as compared to SPPP points. For example, at a distance of 2.5 m, the probability of two spatially correlated potential candidates for D2D communication is 0.8 as compared to 0.42 in SPPP points which occur so close by chance (i.e., not due to spatial correlation under some homophilic relation). This is because SPPP cannot model spatial correlation between points and is characterized by complete spatial randomness.
\section{Average Coverage Probability}\label{covprob}
In this section, we introduce $G$ function as retention probability of D2D nodes at spatial extent $r$ to analyze the interference and resulting average coverage probability of cellular user. We assume interference-limited environment due to large number of potential D2D pairs. Hence, the signal-to-interference ratio (SIR) is given as:
\begin{align}
\textnormal{SIR}_{SBS} =  \frac{p_c f_{c} r^{-\alpha}_{c}}{\sum_{i \in \Phi} p_{i} f_{i} r^{-\alpha}_{i}},
\label{SIR_S1}
\end{align}
where $p_{c}$ and $p_{i}$ are transmit powers of cellular user and D2D interferers, respectively; $f_{c}$, and $f_{i}$ are respective small-scale fading. The corresponding distance dependent path-loss are $r^{-\alpha}_{c}$ and $r^{-\alpha}_{i}$.
\begin{figure}[t]
\centering
\includegraphics[width = 1\columnwidth]{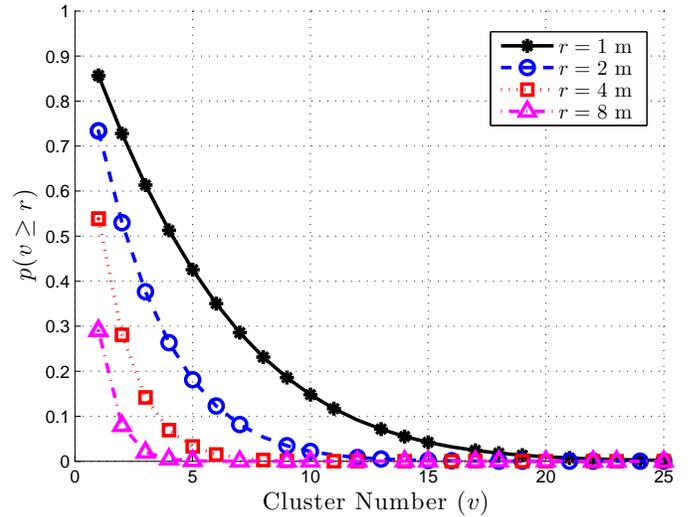}
\caption{The probability of cluster $v$ with different spatial extents $r$ is exponentially distributed with mean $\beta$ \big(see Eq. (\ref{pvgek})\big).}\label{Figure:probvgek}
\end{figure}
\begin{figure}[b]
\centering
\includegraphics[width=1\columnwidth]{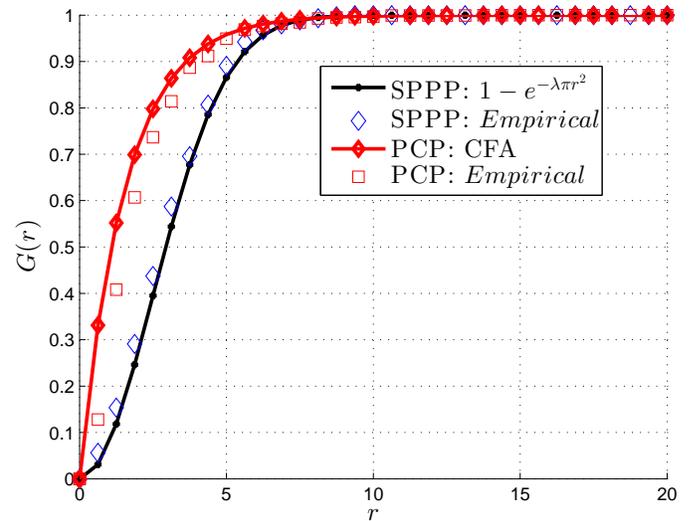}
\caption{Nearest neighbor distribution function of PCP shows high probability for lower values of $r$ as compared to homogeneous SPPP process.}\label{Figure:gfunc}
\end{figure}
Assuming exponential distribution with mean 1 for power received by SBS, the average coverage probability of uplink cellular user is given by the following theorem.
\begin{theorem}
The average coverage probability of a cellular user with underlay D2D communication is
\begin{align}
p_{cov}^{c} = & \int^{R}_{R0}e^{-\frac{2\pi^2 \psi_0 p(r)\,r^{2}_{c}}{\alpha \sin(\frac{2 \pi}{\alpha})} \big(\frac{\gamma}{ p_{c}}\big)^{\frac{2}{\alpha}} \mathbb{E}[{p^{\frac{2}{\alpha}}_{i}}] }\,\frac{2 r_{c}}{R^{2}} dr_{c}.
\label{avgcovprob}
\end{align}
\end{theorem}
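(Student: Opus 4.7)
The plan is to start from $p_{cov}^c = \mathbb{P}(\mathrm{SIR}_{SBS} \ge \gamma)$ and condition on the cellular user distance $r_c$. Because $f_c$ is exponential with unit mean,
\begin{align}
\mathbb{P}\big(\mathrm{SIR}_{SBS} \ge \gamma \mid r_c, I\big) = \exp\Big(-\tfrac{\gamma r_c^{\alpha}}{p_c}\, I\Big),	\IEEEnonumber
\end{align}
where $I = \sum_{i \in \Phi} p_i f_i r_i^{-\alpha}$ is the aggregate D2D interference. Averaging over $I$ reduces the conditional coverage probability to the Laplace transform $\mathcal{L}_I(s)$ evaluated at $s = \gamma r_c^{\alpha}/p_c$.

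Next I would model the pattern of successful D2D interferers as an independently thinned homogeneous SPPP of effective intensity $\psi_0$ with retention probability $p(r) = G(r)$ obtained in Section \ref{retprob}, where $\psi_0$ is the EC-based approximation (\ref{Elambda1}) of the PCP intensity measure. Applying the Laplace functional of this thinned SPPP and averaging out the exponential fades $f_i$ yields
\begin{align}
\mathcal{L}_I(s) = \exp\bigg(-\psi_0 p(r) \int_{\R^2} \mathbb{E}_{p_i}\Big[1 - \tfrac{1}{1 + s p_i \|x\|^{-\alpha}}\Big] dx\bigg).	\IEEEnonumber
\end{align}
Passing to polar coordinates, using Fubini to pull $\mathbb{E}_{p_i}[\cdot]$ out of the spatial integral, and invoking the standard identity $\int_0^\infty u/(1 + u^{\alpha}/c)\, du = \pi c^{2/\alpha}/(\alpha \sin(2\pi/\alpha))$ for $\alpha > 2$ with $c = s p_i$ collapses the exponent to $-2\pi^2 \psi_0 p(r)\, s^{2/\alpha} \mathbb{E}[p_i^{2/\alpha}]/(\alpha \sin(2\pi/\alpha))$. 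Substituting $s = \gamma r_c^{\alpha}/p_c$ gives exactly the exponent in (\ref{avgcovprob}).

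The last step is to decondition on $r_c$ using the uniform density $f(r_c) = 2r_c/R^2$ from (\ref{frc}) over $R_0 \le r_c \le R$, which produces the outer integral and closes the derivation.

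The main obstacle is the thinning step itself. Strictly speaking $G(r)$ is a nearest-neighbor distribution rather than a per-point retention probability, and the PCP is not a Poisson process, so replacing the interferers by an independently thinned SPPP with intensity $\psi_0 G(r)$ is an approximation rather than an equality. The justification is to read $G(r)$ as the probability that a potential D2D candidate has a partner within spatial extent $r$, which, together with the EC-based mean-intensity approximation of Section \ref{retprob} and the Poisson clumping heuristic used to derive $G$, makes the thinned-SPPP surrogate legitimate; once this modelling step is accepted, the remaining manipulations are standard SPPP stochastic-geometry calculations.
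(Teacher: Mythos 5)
Your proposal follows essentially the same route as the paper's Appendix A: condition on $r_c$, use the exponential fading of $f_c$ to reduce coverage to the Laplace transform of the aggregate interference at $s_c=\gamma r_c^{\alpha}/p_c$, evaluate that transform via the Laplace/probability-generating functional of an SPPP of intensity $\psi_0$ independently thinned with retention probability $p(r)=G(r)$, pass to polar coordinates and the standard integral $\int_0^\infty u/(1+u^{\alpha})\,du=\pi/(\alpha\sin(2\pi/\alpha))$ with $R_0\sim 0$, and finally decondition with $f(r_c)=2r_c/R^2$; you also correctly identify the thinned-SPPP surrogate as the key modelling approximation, which the paper likewise invokes rather than proves. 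The only difference is that you keep $\mathbb{E}_{p_i}$ outside the spatial integral via Fubini and substitute $c=sp_i$, which yields $\mathbb{E}[p_i^{2/\alpha}]$ directly and is in fact tidier than the paper's intermediate step of replacing $\mathbb{E}_{p_i}[(1+s_cp_ix^{-\alpha})^{-1}]$ by $(1+s_c\mathbb{E}[p_i]x^{-\alpha})^{-1}$ before arriving at the same final expression.
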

\begin{proof}
See Appendix A.
\end{proof}
For same transmit power of all D2D interferers, the average coverage probability of cellular user for path-loss exponent $\alpha = 4$ and $R_0 \sim 0$ reduces to:
\begin{align}
p_{cov}^{c} = & \frac{e^{-\frac{\pi^2 R^2 \psi_0 p(r)}{2} \sqrt{\frac{\gamma p_i}{p_c}}}-1}{-\frac{\pi^2 R^2 \psi_0 p(r)}{2}\sqrt{\frac{\gamma p_i}{p_c}}}.
\label{PGFL_SC3}
\end{align}
\section{Numerical Results And Discussion}
In this section, we numerically evaluate the analytic expressions of Sec. \ref{covprob} by varying the number of different parameters. The average coverage probability of cellular user in (\ref{PGFL_SC3}) depends on $R$, $\psi_0$ (and an important implicit parameter $u$), spatial extent $r$, D2D transmit power $p_i$, transmit power of cellular user $p_c$, and target threshold $\gamma$. The first and foremost step is to identify implicit parameter $u$ which is introduced as interference control parameter for D2D pairing. This parameter is a function of grid size and more specifically SBS radius $R$. By finding the feasible range of $u$ for a given radius $R$, we have varied other parameters to analyze average coverage probability of cellular user. For different spatial extents $r$, the cumulative interference effect is captured. Since the distance between D2D pairs is much smaller as compared to distance between D2D pair and SBS, it is reasonable to assume same transmit power for every D2D pair in the coverage area. To analyze the interference due to D2D clusters, we introduce nearest neighbor distribution function into Laplace functional of SPPP.
\begin{figure*}[t]
\centering
\begin{minipage}{.5\linewidth}
	\centering
	\subfigure[Intensity function of PCP on 200$\times$200 grid  for threshold values $u = (1, 2, 4, 8)$.]{\label{Figure:fieldthresholdfull}\includegraphics[scale=.5]{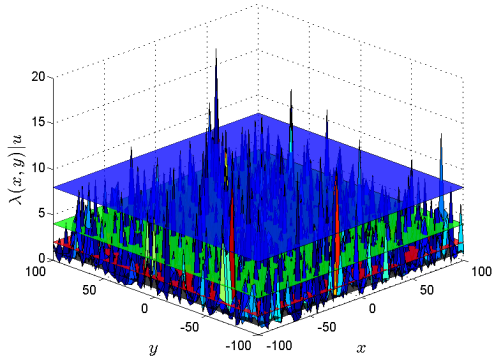}}
	\end{minipage}%
	\begin{minipage}{.5\linewidth}
	\centering
	\subfigure[10$\times$10 extract of Fig. \ref{Figure:fieldthresholdfull} clearly shows surviving and departing blobs at same values of $u$.]{\label{Figure:fieldthreshold}\includegraphics[scale=.5]{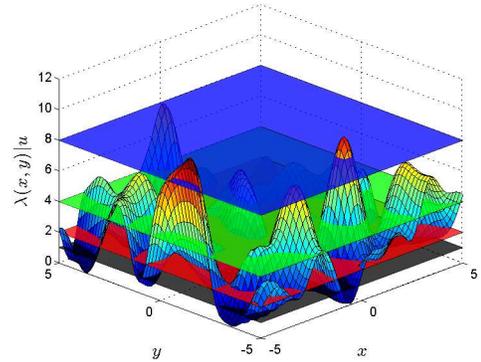}}
	\end{minipage}\par\medskip
\subfigure[Coverage probability of cellular user drops for low values of $u$ i.e., maximum number of possible D2D pairs.]{\label{Figure:covprob}\includegraphics[scale=.5]{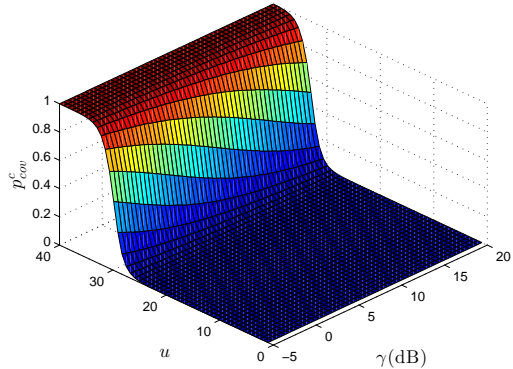}}
\caption{Interference characterization in terms of intensity measure of PCP at different values of $u$ (Fig. \ref{Figure:fieldthresholdfull}, \ref{Figure:fieldthreshold}) and average coverage probability of cellular user (Fig. \ref{Figure:covprob}) for $p_{i} = $ 0 dBm, $p_{c} = $ 20 dBm, and $R =$ 100 m.}
\label{Figure:fieldpcov}
\end{figure*}

In Fig. \ref{Figure:fieldpcov}, we show intensity function at different thresholds and corresponding average coverage probability of cellular user for grid size 200$\times$200 (SBS of radius $R$ = 100 m). For clear illustration, the small portion of this grid has been shown in Fig. \ref{Figure:fieldthreshold}. In this grid, if we set $u = 2$ (transparent black plane), all nodes (despite low spatial correlation) will be considered to make D2D pairs\footnote{The maximum number of upcrossings of $\chi^2_k$-RF occurs at around $u = k$ as can be verified in Euler density (Fig. \ref{Figure:ecdensity}) and expected EC (Fig. \ref{Figure:avgupcross}) plots.} based on the spatial distance $r$. This results in maximum intracell interference and causes blockage for the cellular user. If we increase $u$ (red, green, and blue transparent planes), only those potential D2D pairs will survive that lie under high intensity blobs of $\chi^2_k$-RF. In this case, the coverage probability of cellular user can be ensured while reusing the resources for D2D pairs. The coverage probability curves for different $u$ and $\gamma$ can be seen in Fig. \ref{Figure:covprob}. The high threshold, for example, $u = 35$ in this figure shows no D2D pair and ensure the unit coverage probability of cellular user. 

The interference control parameter $u$ for different grid sizes has been plotted in Fig. \ref{Figure:gridvsu}. In this figure, it can be seen that the interference, due to D2D communication on coverage probability of cellular user, is captured by $u$. As an example, for 10$\times$10 grid size (SBS of radius $R$ = 5 m), the $p^c_{cov}$ rises from 2\% to 98\% for $u$ = 5 to 16 as compared to 1000$\times$1000 grid size ($R$ = 500 m) where the blockage extends on the floor upto the value of $u$ = 37 and shows 98\% rise in $p^c_{cov}$ at $u$ =  45.
\begin{figure}[!htb]
\centering
\includegraphics[width=1\columnwidth]{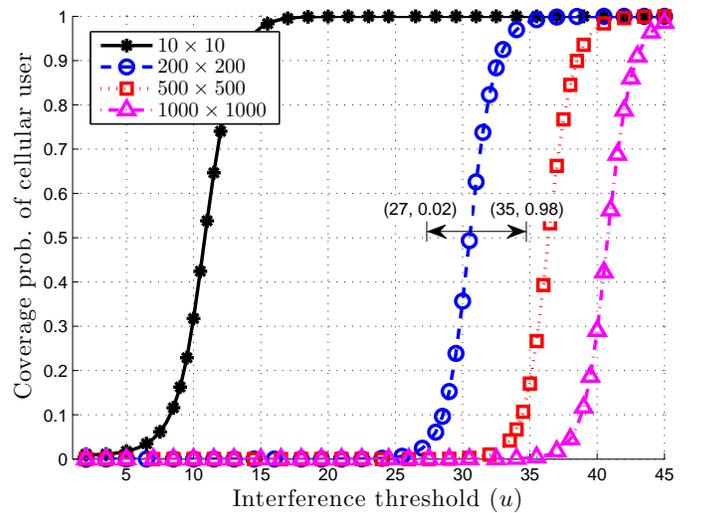}
\caption{Interference control parameter $u$ for different grid sizes (SBS radius $R$) for $p_{i} = $ 0 dBm, $p_{c} = $ 20 dBm.}\label{Figure:gridvsu}
\end{figure}

In Fig. \ref{Figure:clusterinterference}, we have shown the effect of interference due to different cluster sizes on coverage probability of cellular user. The cluster sizes show different number of D2D nodes that survive the threshold $u$. For example, at $u$ = 31 ($p^c_{cov}$ = 0.62 from Fig. \ref{Figure:gridvsu}), four cluster sizes of $r = (16, 8, 4, 2) m$ are shown that consider D2D communication by reusing the frequency of cellular user. The maximum cluster size considers all nodes which are less than or equal to 16 m for D2D communication and hence causes maximum interference. Contrary to this, the minimum cluster size considers nodes with 2m or less distance for D2D communication and hence results in less interference.
\begin{figure}[t]
\centering
\includegraphics[width=1\columnwidth]{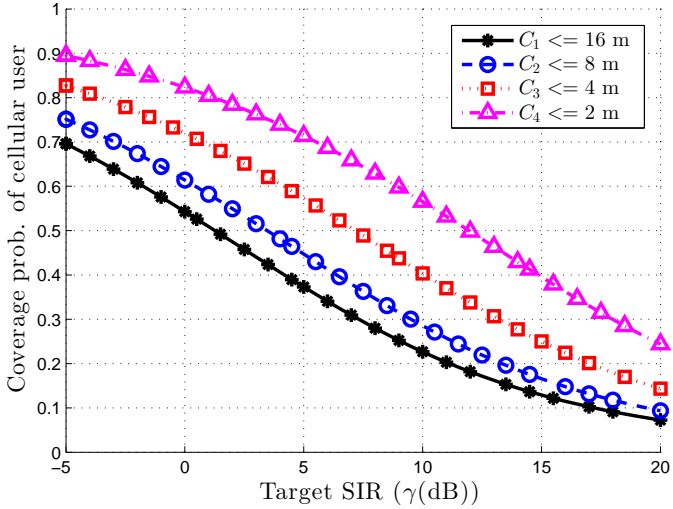}
\caption{Clusters of D2D nodes and coverage probability of cellular user for $u$ = 31, $p_{i} = $ 0 dBm, $p_{c} = $ 20 dBm and $R$ = 100 m.}\label{Figure:clusterinterference}
\end{figure}

The effect of power control on D2D pairs can be seen in Fig. \ref{Figure:variablepi}. In this figure, we can see that the coverage probability of cellular user can be ensured by controlling the transmit power of successful D2D pairs. The coverage drop at two values of $\gamma$ (0 and 20 dB) is approximately equal, however, two curves with smaller $p_i$ ([-20 -10] dBm) converges rapidly at lower values of $\gamma$ as compared to curves with high $p_i$ ([0 10] dBm). This trend is reversed at high values of $\gamma$.

The effect of power control on cellular user and coverage probability curves are shown in Fig. \ref{Figure:variablepc}. It can be seen that the curves for different $p_c$ converge to low coverage probability for high $\gamma$. The coverage probability can be increased by either reducing transmit power of D2D pairs or reducing the number of D2D pairs by increasing threshold $u$. The threshold $u$ and spatial extent $r$ (small $r$ requires lower $p_i$) are key control parameters to ensure the extent of frequency reuse (D2D pairs) while ensuring coverage probability of cellular user.
\section{Conclusion}
In this paper, we proposed PCP process to model inhomogeneous and spatially correlated distribution of MTs. We considered this process to characterize intracell interference in D2D underlay network. We further approximated intractable nearest neighbor distribution function by adopting expected Euler Characteristic and Poisson clumping heuristic. The key findings of this research are enumerated as:
\begin{enumerate}
\item Simple SPPP process with constant intensity measure cannot capture prevailing inhomogeneity and spatial correlation in dense cellular networks. Therefore, point processes with attraction/repulsion property (e.g., Cox process/DPP) are potential candidates for precise spatial modeling of MTs/BSs.
\item Euler Characteristic and RFT framework can be used to analyze and identify high intensity areas/hotspots for D2D communication.
\item Provided SPMs of coverage area are available, statistical inference can be performed to identify clusters of MTs with high spatial correlation (potential areas for D2D communication).
\item The intensity measure of PCP is governed by $\chi^2_k$-RF. In this case the threshold $u$ of the excursion set plays a key role to control cluster size, for D2D communication, level of interference, due to frequency reuse, and coverage probability of cellular user. 
\end{enumerate}

\singlespacing
\section*{APPENDIX A - {Proof of Theorem 4} }\label{AppA}
\renewcommand{\theequation}{A.\arabic{equation}}
The average coverage probability of uplink cellular user distributed uniformly over plane between $R$ and $R_{0}$ at a distance $r_{c}$ from the serving SBS is given as follows:
\begin{align}
\setcounter{equation}{0}
p_{cov}^{c} = & \mathbb{E}_{r_{c}}\big[p\big(\textnormal{SIR}_{SBS} \geq \gamma\big)\,|\,r_{c}\big], \IEEEnonumber
\\ =& \mathbb{E}_{r_{c}}\bigg[p\big(f_{c} \geq \frac{\gamma I_{m}}{p_{c} r^{-\alpha}_{c}}\big)\,|\,r_{c}\bigg],	\label{SIR_AppA}
\end{align}
where 
\begin{align}
I_{m} = \sum_{i \in \Phi} p_{i} f_{i} r^{-\alpha}_{i},
\label{Im}
\end{align}
is the cumulative interference due to D2D clusters in the coverage area and $\mathbb{E}_{(\cdot)}$ is expectation with respect to ($\cdot$).

In (\ref{SIR_AppA}), the coverage probability depends on number of RVs e.g., $p_{c}, f_{c}, r^{-\alpha}_{c}, p_{i}, f_{i}, r^{-\alpha}_{i}$. The power transmitted by the cellular user $p_{c}$ is assumed to be independent of the interferers. The serving SBS uses uplink power control to ensure quality of service of the cellular user based on distance dependent path-loss. The fading $f_{c}$ and $f_i$ follows Rayleigh distribution with $p_c$ and $p_i$ as exponentially distributed. The cellular user is uniformly distributed in the coverage area whereas all potential D2D nodes are distributed according to PCP process. Conditioning on $g = \{p_{i}, f_{i}\}$, the coverage probability of cellular user for a given transmit power $p_c$ is:
\begin{align}
p\big(\textnormal{SIR}_{SBS} \geq \gamma\big)\,|\,r_{c},g =& \int_{x=\frac{\gamma I_{m}}{p_{c} r^{-\alpha}_{c}}}^\infty e^{- x}dx,	\IEEEnonumber
\\ =& e^{-\gamma p^{-1}_{c} r^{\alpha}_{c} I_{m}}.
\label{SIR_AppA1}
\end{align}
De-conditioning by $g$, (\ref{SIR_AppA1}) results in
\begin{align}
p\big(\textnormal{SIR}_{SBS} \geq \gamma\big)\,|\,r_{c} =& \mathbb{E}_{g}\big[e^{-\gamma p^{-1}_{c} r^{\alpha}_{c} I_{m}}\big],	\IEEEnonumber
\\ =& \mathbb{E}_{g}\big[e^{- s_{c} I_{m}}\big],	\IEEEnonumber
\\ =& \mathcal{L}_{I_{m}}\big(s_{c}\big),
\label{SIR_AppA2}
\end{align}
where $s_{c} = \gamma p^{-1}_{c} r^{\alpha}_{c}$. 
\\
Putting the value of $I_{m}$ from (\ref{Im}) in (\ref{SIR_AppA2})
\begin{align}
\mathcal{L}_{I_{m}}\big(s_{c}\big) = & \mathbb{E}_{\Phi,p_{i},f_{i}}\bigg[ e^{-s_{c} \sum_{i \in \Phi} p_{i} f_{i} r^{-\alpha}_{i}}\bigg]	\IEEEnonumber
\\  = & \mathbb{E}_{\Phi,p_{i},f_{i}}\bigg[\prod_{i \in \Phi} e^{-s_{c}  p_{i} f_{i} r^{-\alpha}_{i}}\bigg]	\IEEEnonumber
\\ = & \mathbb{E}_{\Phi}\bigg[\prod_{i \in \Phi} \mathbb{E}_{p_{i}} \bigg(\frac{1}{1+s_{c}  p_i r^{-\alpha}_{i}}\bigg)\bigg]	\IEEEnonumber
\\ = & \mathbb{E}_{\Phi}\bigg[\prod_{i \in \Phi} \underbrace{\bigg(\frac{1}{1+s_{c} \mathbb{E} [p_i] r^{-\alpha}_{i}}\bigg)}_{f(x)}\bigg],
\label{SIR_AppA3}
\end{align}
where (\ref{SIR_AppA3}) results from the i.i.d. assumption of $p_{i}$ and $f_{i}$ and further independence from PCP process.
\begin{figure}[t]
\centering
\includegraphics[width=1\columnwidth]{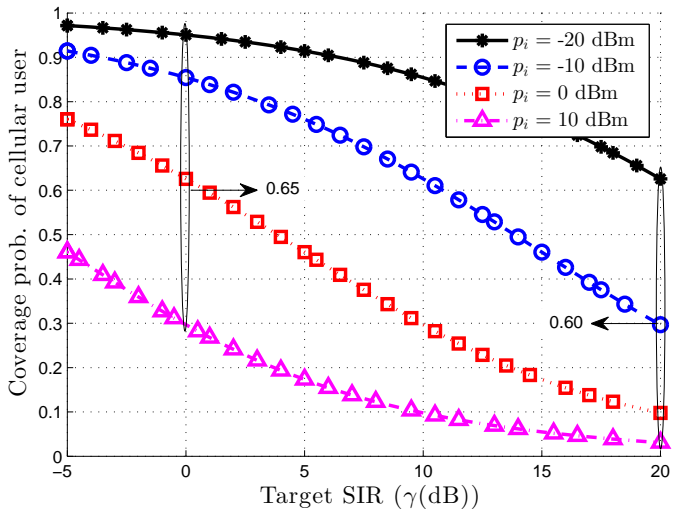}
\caption{Power control on D2D pairs and effect on coverage probability of cellular user for $u$ = 31, $p_{c} = $ 20 dBm and $R$ = 100 m.}\label{Figure:variablepi}
\end{figure}

The probability generating functional of $f(x)$ with retention probability $p(r) = G(r)$ from (\ref{pumc}) and $\Lambda$ from (Sec. \ref{appim}) is\footnote{The Laplace functional of SPPP, with random intensity measure $\Lambda$ and thinning based on $G$ function of PCP has been applied \cite[Proposition 1.2.2 and 1.3.5]{blaszczyszyn_stochastic_2009} to capture interference due to successful D2D clusters.}
\begin{align}
\mathbb{E}_{\Phi}\bigg[\prod_{i \in \Phi}f(x) \bigg] = & \,e^{- \int_{\mathbb{R}^2} (1-f(x)) p(r) \psi_0 dx},	\IEEEnonumber
\\ = & \,e^{-\psi_0\, p(r) \int^{\infty}_{0} \int^{2\pi}_{0} (1-f(x)) x d\theta dx},	\IEEEnonumber
\\ = & \,e^{-2\pi \psi_0\, p(r) \int^{\infty}_{0} (1-f(x)) x dx}.
\label{PGFL_AppA}
\intertext{Putting $f(x)$ from (\ref{SIR_AppA3}) in (\ref{PGFL_AppA}), $\mathcal{L}_{I_{m}}(\cdot)$ can be computed as}
\mathcal{L}_{I_{m}}\big(s_{c}\big) = & e^{-2\pi \psi_0\, p(r) \int^{\infty}_{R_0} \big(1- \frac{1}{1+s_{c}\mathbb{E}[p_{i}] x^{-\alpha}} \big) x dx},	\IEEEnonumber
\\ = & e^{-2\pi \psi_0\, p(r) \int^{\infty}_{R_0} \big(\frac{1}{1+\frac{ x^{\alpha}}{s_{c}\mathbb{E}[p_{i}]}} \big) x dx}.
\label{PGFL_AppA1}
\intertext{By substituting $\frac{x^{\alpha}}{s_{c}\mathbb{E}[p_{i}]} = u^{\alpha}$, (\ref{PGFL_AppA1}) can be derived as}
\mathcal{L}_{I_{m}}\big(s_{c}\big) = & e^{-2\pi \psi_0 \, p(r) (s_{c})^{\frac{2}{\alpha}} \mathbb{E}[{p^{\frac{2}{\alpha}}_{i}}] \int^{\infty}_{R_0} \big(\frac{u}{1+u^{\alpha}}\big) du}.
\label{PGFL_AppA2}
\end{align}
Since $R_0 \ll R$, therefore assuming $R_0 \sim 0$, the integral on right hand side of (\ref{PGFL_AppA2}) can be evaluated as:
\begin{align}
\int^{\infty}_{0} \bigg(\frac{u}{1+u^{\alpha}}\bigg) du = & \frac{\pi }{\alpha \sin(\frac{2 \pi}{\alpha})}.
\label{PGFL_AppA3}
\end{align}
Putting (\ref{PGFL_AppA3}) in (\ref{PGFL_AppA2}) and using uniform distribution from (\ref{frc}), the average coverage probability of a cellular user (\ref{SIR_AppA}) is:
\begin{align}
p_{cov}^{c} = & \mathbb{E}_{r_{c}}\bigg[e^{-\frac{2\pi^2 \psi_0 \, p(r) \,r^{2}_{c}}{\alpha \sin(\frac{2 \pi}{\alpha})} \big(\frac{\gamma}{ p_{c}}\big)^{\frac{2}{\alpha}} \mathbb{E}[{p^{\frac{2}{\alpha}}_{i}}]}\,|r_{c}\bigg], \IEEEnonumber
\\ =& \int^{R}_{R0}e^{-\frac{2\pi^2 \psi_0 p(r)\,r^{2}_{c}}{\alpha \sin(\frac{2 \pi}{\alpha})} \big(\frac{\gamma}{ p_{c}}\big)^{\frac{2}{\alpha}} \mathbb{E}[{p^{\frac{2}{\alpha}}_{i}}] }\,\frac{2 r_{c}}{R^{2}} dr_{c}.
\label{SIR_AppA5}
\end{align}
\begin{figure}[t]
\centering
\includegraphics[width=1\columnwidth]{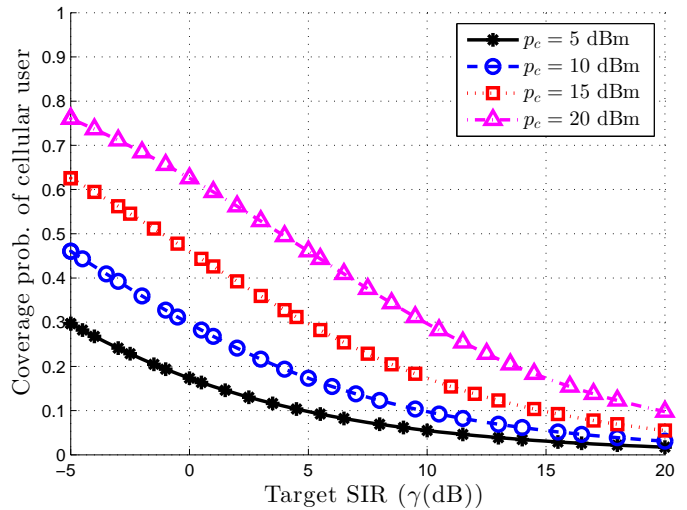}
\caption{Power control on cellular user and coverage probability for $u$ = 31, $p_{i} = $ 0 dBm and $R$ = 100 m.}\label{Figure:variablepc}
\end{figure}

\bibliographystyle{IEEEtran}
\bibliography{IEEEabrv,ref_Cox}
\end{document}